\documentclass[11pt]{article}

\usepackage[margin=1in]{geometry}
\usepackage{amsmath,amssymb,amsthm,mathtools}
\usepackage[authoryear]{natbib}
\usepackage[colorlinks=true,linkcolor=blue,citecolor=blue,urlcolor=blue]{hyperref}

\newtheorem{theorem}{Theorem}
\newtheorem{proposition}{Proposition}

\newtheorem{axiom}{Axiom}
\newtheorem{definition}{Definition}
\newtheorem{remark}{Remark}

\title{Single-Peakedness Does Not Prevent Leapfrogging under Abstention}

\author{%
Aman Ray\\
Madras School of Economics\\
\texttt{ba24aman@mse.ac.in}
\and
Srikanth B. Pai\\
Madras School of Economics\\
\texttt{srikanthbpai@mse.ac.in}\\
ORCID: 0000-0002-2283-8790
}

\date{\today}

\begin{document}

\maketitle

\begin{abstract}
Parties in spatial competition rarely choose platforms that reverse their ideological order. 
\textit{Mutual leapfrogging} is the strongest form of reversal: each party locates beyond the other party's ideal point. 
In voting models without abstention single-peakedness rules out such reversals.
We show that this conclusion does not survive endogenous abstention. 
There is a spatial voting model in which voter and party preferences are single-peaked, yet mutual leapfrogging occurs in pure-strategy equilibrium.
All voters participate at the equilibrium profile. 
The equilibrium survives because some deviations change which voters participate.
We prove that such equilibria are impossible under a sufficient ordinal condition: parties agree on how to rank leftward and rightward deviations from their ideal points. 
The condition is general enough to cover symmetric single-peaked utilities and common translated utility shapes.
\end{abstract}

\noindent\textbf{JEL Classification:} D72; D71; C72

\medskip
\noindent\textbf{Keywords:} Spatial voting; Electoral competition; Abstention; Single-peaked preferences; Leapfrogging

\section{Introduction}
\label{sec:introduction}
Why do parties rarely reverse their ideological order and leapfrog one another?
\citet{adams2001} finds that parties rarely leapfrog one another and that their policy movements are constrained by ideology. Since \citet{Wittman1983}, models of ideological candidates have typically assume a concave utility function for ideological preferences. These assumptions imply that each party has an ideal point and ranks policies lower as they move away from it. This note asks whether this weaker ordinal property of single-peakedness is enough to rule out leapfrogging.

We first show that mutual leapfrogging is impossible without voter abstention. More generally, the same conclusion holds whenever the set of participating voters is independent of the parties' positions. A party that has crossed beyond its opponent can move back toward its own ideal point without losing electoral support, and the move is also ideologically better.

Abstention is therefore essential to the counterexample. We study alienation-style abstention: a voter participates if at least one party platform is acceptable to her. Once participation depends on the platforms, a party's deviation can change which voters participate. This is a standard reason for abstention in spatial voting models \citep{Anderson1992,Llavador2000}, and alienation is also an empirically relevant source of abstention \citep{Adams2006}. Once participation depends on the platforms, a party's deviation can change which voters participate. We first show that any equilibrium that reverses party order must exhibit \textit{mutual leapfrogging}: each party must have crossed the other party's ideal point.

We give an explicit counterexample showing that single-peaked party preferences alone do not rule out mutual leapfrogging under abstention. All voters participate at the equilibrium profile, but some deviations change which voters participate. 

The example identifies a criterion that enables leapfrogging. Single-peakedness ranks policies on each side of a party's ideal point, but not across sides. We then prove impossibility of mutual leapfrogging under a simple sufficient condition: parties agree on how to rank leftward and rightward deviations from their ideal points. This condition covers symmetric single-peaked utilities and common translated utility shapes, while remaining weaker than both.

Section \ref{sec:model} defines the model which is an ordinal version of alienation based abstention model of spatial voting. Section \ref{sec:counterexample} presents the example and shows why abstention matters. Section \ref{sec:crossside} presents the ordinal condition under which mutual leapfrogging is impossible.

\section{Model}
\label{sec:model}

Let
\[
X=\{x_j:j\in I\}, \qquad x_j<x_{j+1},
\]
where \(I\subseteq \mathbb Z\) is an interval of integers. Thus \(X\) may be finite or countably infinite. The integer index records position in the linear order: \(x_{j+k}\) is \(k\) steps to the right of \(x_j\), whenever \(j+k\in I\).

There are two parties, \(A\) and \(B\), with ideal points \(\tau_A=x_p\) and \(\tau_B=x_q\), where \(p<q\). Each party \(i\in\{A,B\}\) has a weak preference order \(\succeq_i\) over \(X\), with unique ideal point \(\tau_i\). We write \(x\succ_i y\) when \(x\succeq_i y\) holds while \(y\succeq_i x\) fails.

\begin{axiom}[Single-peakedness of party preferences]
\label{ax:ssm}
For each party \(i\in\{A,B\}\), the preference order \(\succeq_i\) is single-peaked with peak \(\tau_i\): if \(x_a<x_b\le \tau_i\), then \(x_b\succ_i x_a\); and if \(\tau_i\le x_b<x_a\), then \(x_b\succ_i x_a\).
\end{axiom}

Axiom \ref{ax:ssm} is the standard single-peakedness condition on a line. It ranks alternatives on the same side of the peak by proximity to the peak, but imposes no restriction on comparisons between alternatives lying on opposite sides of the peak.

\begin{remark}[Ordinal displacement notation]
\label{rem:ordinal_displacements}
For later use, we introduce signed ordinal-displacement notation. If \(\tau_i=x_c\), define \(R_i(k)=x_{c+k}\) and \(L_i(k)=x_{c-k}\), whenever these policies belong to \(X\). Thus \(R_i(k)\) and \(L_i(k)\) denote the policies \(k\) ordinal steps to the right and left of party \(i\)'s ideal point. By Axiom \ref{ax:ssm}, for each party \(i\),
\[
\tau_i\succ_i R_i(1)\succ_i R_i(2)\succ_i\cdots
\quad\text{and}\quad
\tau_i\succ_i L_i(1)\succ_i L_i(2)\succ_i\cdots,
\]
whenever the corresponding policies belong to \(X\). Axiom \ref{ax:ssm} imposes no restriction on comparisons of the form \(R_i(a)\) versus \(L_i(b)\).
\end{remark}

A party's strategy is a policy in \(X\). We refer to the policies chosen by the parties as their \emph{platforms}. If party \(A\) chooses \(s\in X\) and party \(B\) chooses \(t\in X\), we call \((s,t)\) a \emph{platform profile}.

\subsection*{Electoral outcomes}

We use an ordinal version of \citet{Llavador2000}'s alienation rule for a finite electorate.  Instead of requiring a party platform to lie within a fixed distance of the voter's ideal point, we require a party platform to lie in the voter's acceptable interval. 
These intervals may differ across voters.

The electorate is a finite set \(V\). Each voter \(v\in V\) has an ideal point \(\theta_v=x_{c_v}\in X\), strict single-peaked preferences over \(X\) with peak \(\theta_v\), and an attraction interval \(A_v\subseteq X\). We assume \(A_v\) is an interval in the policy order and contains \(\theta_v\).

Given platforms \((s,t)\), voter \(v\) is active if \(s\in A_v\) or \(t\in A_v\). If active, voter \(v\) votes for the platform she strictly prefers. If she is indifferent between the two platforms, she abstains. In particular, if \(s=t\), no active voter strictly prefers one party to the other.

% The electorate is a finite set \(V\). Each voter \(v\in V\) has an ideal point \(\theta_v=x_{c_v}\in X\), strict symmetric single-peaked preferences over \(X\) with peak \(\theta_v\), and an abstention radius \(k_v\in\mathbb Z_{\ge 0}\). Given platforms \((s,t)=(x_m,x_\ell)\), voter \(v\) is active if at least one platform lies within \(k_v\) ordinal steps of her ideal point:
% \[
% \min\{|m-c_v|,|\ell-c_v|\}\le k_v.
% \]
% If active, voter \(v\) votes for the platform she strictly prefers. If she is indifferent between the two platforms, she abstains. In particular, if \(s=t\), no active voter strictly prefers one party to the other.

Let \(N_A(s,t)\) be the number of active voters who strictly prefer \(s\) to \(t\), and let \(N_B(s,t)\) be the number of active voters who strictly prefer \(t\) to \(s\). Define \(g:X\times X\to\{A,B,T\}\) by
\[
g(s,t)=
\begin{cases}
A, & N_A(s,t)>N_B(s,t),\\
B, & N_B(s,t)>N_A(s,t),\\
T, & N_A(s,t)=N_B(s,t).
\end{cases}
\]
Here \(T\) denotes a tied electoral outcome. Since \(g(s,s)=T\) for every \(s\in X\), copying the opponent's platform always produces a tie. This is the only property of the voting rule used in Proposition \ref{prop:order_reversal}.

\subsection*{Party objectives}

Parties first rank electoral outcomes:
\[
\mathrm{Win}\succ \mathrm{Tie}\succ \mathrm{Lose}.
\]
For party \(A\), outcome \(A\) is a win and outcome \(B\) is a loss; for party \(B\), outcome \(B\) is a win and outcome \(A\) is a loss.

Conditional on the same electoral outcome, each party ranks profiles by its own announced platform. Thus, if \(g(s,t)=g(s',t')\), then party \(A\) weakly prefers \((s,t)\) to \((s',t')\) if and only if \(s\succeq_A s'\), and party \(B\) weakly prefers \((s,t)\) to \((s',t')\) if and only if \(t\succeq_B t'\).

This specification avoids any cardinal tradeoff between electoral success and ideology. Ideological preferences are used only to compare profiles with the same electoral outcome. The specification is nevertheless office-first: a win is always preferred to a tie, and a tie is always preferred to a loss.

A pure-strategy Nash equilibrium is a platform profile \((s,t)\in X\times X\) such that neither party has a strictly profitable unilateral deviation under this lexicographic ordering.

\begin{definition}
A platform profile \((s,t)\) exhibits \emph{reversed order} if \(t<s\). It exhibits \emph{mutual leapfrogging} if each party crosses the other party's ideal point, that is, if
\[
t<\tau_A<\tau_B<s.
\]
A Nash equilibrium exhibits reversed order, respectively mutual leapfrogging, if its equilibrium profile does.
\end{definition}

Since \(\tau_A<\tau_B\), reversed order means that the left party chooses the right platform and the right party chooses the left platform. The next proposition shows that any reversed-order equilibrium must be mutually leapfrogged.

\begin{proposition}[Order reversal implies mutual leapfrogging]
\label{prop:order_reversal}
Suppose Axiom \ref{ax:ssm} holds. Let \((s,t)\) be a pure-strategy Nash equilibrium with \(t<s\). Then the election is tied at \((s,t)\), and
\[
t<\tau_A<\tau_B<s.
\]
\end{proposition}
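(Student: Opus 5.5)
The plan is to use only the copying property $g(s,s)=T$ together with Axiom \ref{ax:ssm} and the lexicographic objectives. We proceed in three steps: first show the election is tied, then locate $s$ relative to $\tau_B$, then locate $t$ relative to $\tau_A$.

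\emph{Step 1: the election is tied.} Suppose instead that $g(s,t)$ is a win for one party, say $A$. Then $B$ loses at $(s,t)$. If $B$ deviates to $t'=s$, the outcome is $g(s,s)=T$, and a tie is strictly better than a loss. This contradicts equilibrium. The case $g(s,t)=B$ is symmetric, with $A$ deviating to $s'=t$. Hence $g(s,t)=T$.

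\emph{Step 2: $\tau_B<s$.} Suppose $s\le\tau_B$. Since $t<s\le\tau_B$, both $t$ and $s$ lie weakly left of $B$'s peak, with $t$ strictly further from it. Axiom \ref{ax:ssm} then gives $s\succ_B t$. Now $B$ can deviate to $t'=s$. The outcome is $g(s,s)=T$, which equals the current outcome, and $B$'s own platform strictly improves. So the deviation is strictly profitable, a contradiction. Hence $s>\tau_B$.

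\emph{Step 3: $t<\tau_A$.} Symmetrically, suppose $t\ge\tau_A$. Then $\tau_A\le t<s$, so Axiom \ref{ax:ssm} gives $t\succ_A s$. Party $A$ can copy $t$, keeping the tie and obtaining a strictly better platform, again a contradiction. Hence $t<\tau_A$. Combining this with $\tau_A<\tau_B$ from $p<q$ and with Step 2 yields $t<\tau_A<\tau_B<s$.

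The main obstacle is only bookkeeping. In Steps 2 and 3 we must apply the axiom with a weak inequality at the peak (the case $s=\tau_B$ or $t=\tau_A$). The axiom's statement with $x_b\le\tau_i$ already covers this boundary case, so no separate treatment is needed. No property of voter abstention is used, which matches the paper's remark that $g(s,s)=T$ is the only feature of the voting rule required.
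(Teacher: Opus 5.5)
Your proposal is correct and follows essentially the same route as the paper's proof. Both establish the tie by copying the opponent's platform, then use the copy deviation, which preserves the tie, together with same-side single-peakedness to rule out $s\le\tau_B$ and $\tau_A\le t$; the only difference is that you treat $B$ before $A$.
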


\begin{proof}
The election must be tied. If one party loses, it can copy the opponent's platform and obtain a tie, which is strictly better than losing. Since party \(A\)'s deviation to \(t\) gives a tie, equilibrium requires \(s\succeq_A t\). If \(\tau_A\le t<s\), then \(t\) is closer than \(s\) to \(\tau_A\) on the same side, so single-peakedness gives \(t\succ_A s\), a contradiction. Hence \(t<\tau_A\).
Similarly, party \(B\)'s deviation to \(s\) gives a tie, so equilibrium requires \(t\succeq_B s\). If \(t<s\le \tau_B\), then \(s\) is closer than \(t\) to \(\tau_B\) on the same side, so single-peakedness gives \(s\succ_B t\), a contradiction. Hence \(\tau_B<s\).
Since \(\tau_A<\tau_B\), the result follows.
\end{proof}

Proposition \ref{prop:order_reversal} uses only the fact that copying the opponent's platform produces a tie. The existence result below uses abstention more substantively: the set of active voters can change after deviations.

\section{Single-peakedness alone is not sufficient with abstention}
\label{sec:counterexample}

We first show that mutual leapfrogging is impossible when the set of participating voters does not change with the pair of platforms. We then give a simple example in which mutual leapfrogging occurs with four voters and seven policies: all voters participate at the equilibrium profile, but some deviations change who participates.

\begin{proposition}
\label{prop:same_active_voters_no_leapfrogging}
Suppose the same voters are active at every pair of distinct platforms. Then no pure-strategy Nash equilibrium exhibits mutual leapfrogging.
\end{proposition}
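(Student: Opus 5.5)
The plan is to argue by contradiction, using only party $A$'s incentives. Suppose $(s,t)$ is a pure-strategy Nash equilibrium with $t<\tau_A<\tau_B<s$. By Proposition \ref{prop:order_reversal} the election is tied at $(s,t)$, so $N_A(s,t)=N_B(s,t)$. I will show that the deviation of party $A$ to its own ideal point $\tau_A$ is strictly profitable. This is the formal version of the intuition in the introduction: a party that has crossed beyond its opponent can move back toward its ideal point without losing support.

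\textbf{Step 1 (active set unchanged).} Since $t<\tau_A<s$, both $(s,t)$ and $(\tau_A,t)$ are pairs of distinct platforms. By hypothesis, the same set of voters is active at both profiles. So it suffices to compare the preferences of a fixed set of voters between the two pairs.

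\textbf{Step 2 (vote monotonicity).} Let $s'$ be any policy with $t<s'<s$; we apply this with $s'=\tau_A$. We use only strict single-peakedness of voter preferences.

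First, if voter $v$ strictly prefers $s$ to $t$, then she strictly prefers $s'$ to $t$.
\begin{itemize}
\item If $\theta_v\le t$, then $t\succ_v s'\succ_v s$, which contradicts $s\succ_v t$. Hence $\theta_v>t$.
\item If $\theta_v\ge s'$, then $s'$ lies between $t$ and $\theta_v$, so $s'\succ_v t$.
\item If $t<\theta_v<s'$, then $s'$ lies between $\theta_v$ and $s$, so $s'\succ_v s\succ_v t$.
\end{itemize}

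Second, if $v$ strictly prefers $t$ to $s'$, then she strictly prefers $t$ to $s$.
\begin{itemize}
\item If $\theta_v\le s'$, then $s'\succ_v s$, so $t\succ_v s$.
\item If $\theta_v>s'$, then $s'$ lies between $t$ and $\theta_v$, so $s'\succ_v t$, a contradiction.
\end{itemize}

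Together these give $N_A(\tau_A,t)\ge N_A(s,t)$ and $N_B(\tau_A,t)\le N_B(s,t)$.

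\textbf{Step 3 (conclusion).} From Step 2 and the tie at $(s,t)$, we get $N_A(\tau_A,t)\ge N_B(\tau_A,t)$. Hence $g(\tau_A,t)\in\{A,T\}$.
\begin{itemize}
\item If the outcome is $A$, the deviation turns a tie into a win, so it is strictly profitable.
\item If the outcome is $T$, the electoral outcome is unchanged. By Axiom \ref{ax:ssm}, $\tau_A\succ_A s$, so the deviation is again strictly profitable.
\end{itemize}
Either way, $(s,t)$ is not an equilibrium.

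The main care point is Step 2, the voter-level monotonicity. It must be done with ordinal single-peakedness only, through the three-case split on the location of $\theta_v$ relative to $t$, $s'$ and $s$. Step 1 is exactly where the hypothesis of the proposition enters. Without it, the move to $\tau_A$ could deactivate voters whose attraction intervals contain $s$ but not $\tau_A$ or $t$, and that is the channel the counterexample will exploit.
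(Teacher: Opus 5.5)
Your proof is correct and follows the paper's argument. Both assume a mutually leapfrogged equilibrium, note that it is tied, move party $A$ to $\tau_A$, and use voter single-peakedness (with essentially the same case split on $\theta_v$) together with the fixed active set to show $A$ cannot lose, then conclude from a win or from the ideological improvement. Your explicit check that $N_B$ cannot rise is a useful addition: the paper goes from ``$A$ keeps all its votes'' to ``tie or win'' without stating this step, implicitly relying on strict voter preferences so that every active voter votes at distinct platforms.
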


\begin{proof}
Suppose, toward a contradiction, that \((s,t)\) is a pure-strategy Nash equilibrium exhibiting mutual leapfrogging. Then \(t<\tau_A<\tau_B<s\).

The election at \((s,t)\) must be tied. If party \(A\) loses, it can copy party \(B\)'s platform and obtain a tie, which is strictly better than losing. The same argument applies to party \(B\). Hence the election is tied.

We claim that party \(A\)'s deviation from \(s\) to \(\tau_A\) cannot reduce its number of votes. Let \(v\) be any voter who votes for \(s\) against \(t\). Since \(t<\tau_A<s\), single-peakedness of voter preferences implies that \(v\) also prefers \(\tau_A\) to \(t\). Indeed, if \(v\)'s peak lies weakly to the left of \(t\), then \(t\succ_v s\), a contradiction. If the peak lies between \(t\) and \(\tau_A\), then \(\tau_A\succ_v s\), hence \(\tau_A\succ_v t\). If the peak lies weakly to the right of \(\tau_A\), then \(\tau_A\succ_v t\).

By assumption, the same voters are active at \((\tau_A,t)\) as at \((s,t)\). Therefore every voter counted for party \(A\) at \((s,t)\) is still counted for party \(A\) at \((\tau_A,t)\). Since \((s,t)\) is tied, party \(A\) obtains either a tie or a win after deviating to \(\tau_A\).

If it wins, the deviation is profitable. If it ties, the electoral outcome is unchanged and \(\tau_A\succ_A s\), since \(\tau_A\) is party \(A\)'s unique ideal point. In either case party \(A\) has a profitable deviation, contradicting equilibrium.
\end{proof}

The example below shows what abstention changes. At the equilibrium profile all voters are active, but deviations can change which voters are active. The example uses seven policies and four voters.

Let \(X=\{x_1,\ldots,x_7\}\), with \(x_1<\cdots<x_7\). The parties' ideal points are \(\tau_A=x_3\) and \(\tau_B=x_5\). Party preferences are strict and given by
\[
\begin{array}{ll}
A: & x_3\succ_A x_4\succ_A x_5\succ_A x_6\succ_A x_2\succ_A x_1\succ_A x_7,\\[1mm]
B: & x_5\succ_B x_4\succ_B x_3\succ_B x_2\succ_B x_6\succ_B x_1\succ_B x_7.
\end{array}
\]
Both preferences are single-peaked. Party \(A\)'s right side is ranked \(x_3\succ_A x_4\succ_A x_5\succ_A x_6\succ_A x_7\), and its left side is ranked \(x_3\succ_A x_2\succ_A x_1\). Party \(B\)'s left side is ranked \(x_5\succ_B x_4\succ_B x_3\succ_B x_2\succ_B x_1\), and its right side is ranked \(x_5\succ_B x_6\succ_B x_7\).

There are four voters, two on each side. Their ideal points are \(\theta_{v_1}=x_1\), \(\theta_{v_2}=x_2\), \(\theta_{v_6}=x_6\), and \(\theta_{v_7}=x_7\). Their attraction intervals are
\[
A_{v_1}=\{x_1,x_2\},\qquad
A_{v_2}=\{x_1,x_2,x_3\},
\]
and
\[
A_{v_6}=\{x_5,x_6,x_7\},\qquad
A_{v_7}=\{x_6,x_7\}.
\]
Their preferences are strict and single-peaked:
\[
\begin{array}{ll}
v_1: & x_1\succ x_2\succ x_3\succ x_4\succ x_5\succ x_6\succ x_7,\\
v_2: & x_2\succ x_1\succ x_3\succ x_4\succ x_5\succ x_6\succ x_7,\\
v_6: & x_6\succ x_5\succ x_7\succ x_4\succ x_3\succ x_2\succ x_1,\\
v_7: & x_7\succ x_6\succ x_5\succ x_4\succ x_3\succ x_2\succ x_1.
\end{array}
\]

At the profile \((s,t)=(x_6,x_2)\), all four voters are active. Voters \(v_1\) and \(v_2\) vote for party \(B\), while voters \(v_6\) and \(v_7\) vote for party \(A\). Hence \(g(x_6,x_2)=T\). The unilateral deviations are summarized in the following table:
\[
\begin{array}{c|ccccccc}
 & x_1 & x_2 & x_3 & x_4 & x_5 & x_6 & x_7\\
\hline
g(\cdot,x_2) & T & T & B & B & B & T & T\\
g(x_6,\cdot) & T & T & A & A & A & T & T
\end{array}
\]

Against \(x_2\), party \(A\)'s non-losing platforms are \(x_1,x_2,x_6,x_7\), and \(A\) strictly prefers \(x_6\) to all three alternatives. Against \(x_6\), party \(B\)'s non-losing platforms are \(x_1,x_2,x_6,x_7\), and \(B\) strictly prefers \(x_2\) to all three alternatives. Thus \((x_6,x_2)\) is a pure-strategy Nash equilibrium. Since
\[
x_2<\tau_A=x_3<\tau_B=x_5<x_6,
\]
the equilibrium exhibits mutual leapfrogging.

\section{Agreement on cross-side comparisons}
\label{sec:crossside}

Single-peakedness compares policies lying on the same side of a party's ideal point. It does not compare a rightward deviation with a leftward deviation. The counterexample exploits precisely this freedom. We now impose an ordinal restriction on such cross-side comparisons. We use the notation \(R_i(k)\) and \(L_i(k)\) from Remark \ref{rem:ordinal_displacements}.

\begin{axiom}[Agreement on cross-side comparisons]
\label{ax:cpc}
For all \(a,b\in\mathbb Z_{>0}\) such that the relevant policies belong to \(X\),
\[
R_A(a)\succeq_A L_A(b)
\quad\Longleftrightarrow\quad
R_B(a)\succeq_B L_B(b),
\]
and
\[
L_A(b)\succeq_A R_A(a)
\quad\Longleftrightarrow\quad
L_B(b)\succeq_B R_B(a).
\]
\end{axiom}

Axiom \ref{ax:cpc} says that the parties agree on the ranking of each cross-side pair: a rightward deviation of size \(a\) and a leftward deviation of size \(b\). It does not require left-right symmetry. It allows both parties to prefer, say, a rightward deviation of size \(a\) to a leftward deviation of size \(b\). It rules out only disagreement between the parties over the same cross-side pair.

The axiom is ordinal, but it is implied by familiar cardinal restrictions.

\begin{proposition}
\label{prop:axiom2}
Axiom \ref{ax:cpc} is implied by each of the following assumptions.

\begin{enumerate}
\item[(i)] \textbf{Symmetric single-peaked utilities.} For each party \(i\), there is a strictly decreasing function \(\phi_i:\mathbb Z_{\ge 0}\to\mathbb R\) such that, if \(\tau_i=x_c\), then \(v_i(x_m)=\phi_i(|m-c|)\).

\item[(ii)] \textbf{Common utility shape.} There is a function \(u:\mathbb Z\to\mathbb R\) such that, if \(\tau_i=x_c\), then \(v_i(x_m)=u(m-c)\).
\end{enumerate}
\end{proposition}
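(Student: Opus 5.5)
The plan is to verify Axiom \ref{ax:cpc} directly in each case. In both cases the cardinal assumption writes each cross-side comparison in terms of quantities that do not depend on the party, or that are fixed by the displacement sizes \(a\) and \(b\). Throughout, \(v_i\) is read as a utility representation of \(\succeq_i\), so \(x\succeq_i y\) if and only if \(v_i(x)\ge v_i(y)\). The first step is to translate the displacement notation into index arithmetic. If \(\tau_i=x_c\), then \(R_i(a)=x_{c+a}\) and \(L_i(b)=x_{c-b}\). Hence in case (i)
\[
v_i(R_i(a))=\phi_i(a),\qquad v_i(L_i(b))=\phi_i(b),
\]
and in case (ii)
\[
v_i(R_i(a))=u(a),\qquad v_i(L_i(b))=u(-b).
\]

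For case (ii), the comparison \(R_i(a)\succeq_i L_i(b)\) holds if and only if \(u(a)\ge u(-b)\). This condition does not mention \(i\), so it holds for \(A\) exactly when it holds for \(B\). The same argument with the inequality reversed, \(u(-b)\ge u(a)\), gives the second equivalence.

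For case (i), the comparison \(R_i(a)\succeq_i L_i(b)\) holds if and only if \(\phi_i(a)\ge\phi_i(b)\). Since \(\phi_i\) is strictly decreasing, this holds if and only if \(a\le b\), and that condition is again independent of \(i\). Likewise \(L_i(b)\succeq_i R_i(a)\) holds if and only if \(b\le a\). Both equivalences in Axiom \ref{ax:cpc} follow.

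The only subtle point is the qualifier ``whenever the relevant policies belong to \(X\).'' The two parties have different ideal points, so \(R_A(a)\) may exist while \(R_B(a)\) does not, for example when \(X\) is finite. I would read the axiom as applying to pairs \((a,b)\) for which all four policies \(R_A(a),L_A(b),R_B(a),L_B(b)\) lie in \(X\). The argument above applies verbatim to every such pair, because it only ever evaluates utilities at indices \(c\pm a\) and \(c\mp b\) that exist. Once this reading is fixed, the proof is immediate, and no earlier results of the paper are needed.
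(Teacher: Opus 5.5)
Your proof is correct and matches the paper's argument. In case (i), strict monotonicity of $\phi_i$ reduces each cross-side comparison to $a\le b$ (or $b\le a$); in case (ii), it reduces to $u(a)\ge u(-b)$ (or $u(-b)\ge u(a)$); neither condition depends on $i$. Your reading of the domain qualifier, that all four policies lie in $X$, is a sensible clarification that the paper leaves implicit.
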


\begin{proof}
Under (i), for each party \(i\),
\[
R_i(a)\succeq_i L_i(b)
\quad\Longleftrightarrow\quad
\phi_i(a)\ge \phi_i(b)
\quad\Longleftrightarrow\quad
a\le b,
\]
and
\[
L_i(b)\succeq_i R_i(a)
\quad\Longleftrightarrow\quad
\phi_i(b)\ge \phi_i(a)
\quad\Longleftrightarrow\quad
b\le a.
\]
Both conditions are independent of \(i\), so the parties agree on every cross-side pair.

Under (ii), for each party \(i\),
\[
R_i(a)\succeq_i L_i(b)
\quad\Longleftrightarrow\quad
u(a)\ge u(-b),
\]
and
\[
L_i(b)\succeq_i R_i(a)
\quad\Longleftrightarrow\quad
u(-b)\ge u(a).
\]
Again, the right-hand sides are independent of \(i\). Hence Axiom \ref{ax:cpc} holds.
\end{proof}

Axiom \ref{ax:cpc} is weaker than either sufficient condition above. Symmetric distance utilities imply \(R_i(k)\sim_i L_i(k)\) for every party \(i\) and every \(k\). Axiom \ref{ax:cpc} does not require this symmetry. It allows both parties to prefer, for example, a rightward deviation of two steps to a leftward deviation of one step, provided they agree on that comparison. A common translated utility shape is also stronger: it imposes a common numerical representation over all signed deviations. Axiom \ref{ax:cpc} keeps only agreement on the ordinal ranking of each cross-side pair.

We now show that this ordinal agreement rules out mutual leapfrogging.

\begin{theorem}
\label{thm:no_leapfrogging}
Suppose Axioms \ref{ax:ssm} and \ref{ax:cpc} hold. Then no pure-strategy Nash equilibrium exhibits mutual leapfrogging.
\end{theorem}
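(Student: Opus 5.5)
The plan is to combine the two "copy the opponent" deviations from Proposition \ref{prop:order_reversal} with Axiom \ref{ax:cpc}, and derive a contradiction. Suppose \((s,t)\) is a pure-strategy Nash equilibrium with \(t<\tau_A<\tau_B<s\). Write \(\tau_A=x_p\), \(\tau_B=x_q\), \(s=x_m\), \(t=x_\ell\), so that \(\ell<p<q<m\). By the argument of Proposition \ref{prop:order_reversal}, the election at \((s,t)\) is tied. Party \(A\) can deviate to \(t\) and still obtain a tie, so equilibrium requires \(s\succeq_A t\). Likewise, party \(B\) can deviate to \(s\) and obtain a tie, so equilibrium requires \(t\succeq_B s\). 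I will show that the first inequality, transported to party \(B\) through Axiom \ref{ax:cpc}, forces \(s\succ_B t\).

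The main obstacle is a domain issue. The displacements of \(s\) and \(t\) from \(\tau_A\) are \(m-p\) and \(p-\ell\), while from \(\tau_B\) they are \(m-q\) and \(q-\ell\). The policies needed to compare these directly, such as \(x_{q+(m-p)}\) or \(x_{p-(q-\ell)}\), may not lie in \(X\). To avoid this, I will use single-peakedness to shrink the displacements first, so that every policy used lies strictly between \(t\) and \(s\).

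First step, on party \(A\)'s side. The policy \(R_A(m-q)=x_{p+m-q}\) lies in \(X\), because \(p<p+m-q<m\). Since \(x_{p+m-q}\) and \(x_m\) both lie to the right of \(\tau_A\), and \(x_{p+m-q}\) is closer, Axiom \ref{ax:ssm} gives \(x_{p+m-q}\succ_A x_m\). Combining this with \(x_m\succeq_A x_\ell\), and using \(x_\ell=L_A(p-\ell)\), transitivity yields the strict ranking
\[
R_A(m-q)\succ_A L_A(p-\ell),
\]
that is, \(R_A(m-q)\succeq_A L_A(p-\ell)\) holds and \(L_A(p-\ell)\succeq_A R_A(m-q)\) fails.

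Second step, transfer to party \(B\). The policies \(R_B(m-q)=x_m\) and \(L_B(p-\ell)=x_{q-p+\ell}\) both lie in \(X\), since \(\ell<q-p+\ell<q\). The first biconditional in Axiom \ref{ax:cpc} gives \(R_B(m-q)\succeq_B L_B(p-\ell)\). The second biconditional, applied to the failure of \(L_A(p-\ell)\succeq_A R_A(m-q)\), gives that \(L_B(p-\ell)\succeq_B R_B(m-q)\) fails. Together these give \(x_m\succ_B x_{q-p+\ell}\). Strictness is essential here, and it is exactly why the axiom is stated with both biconditionals. Finally, \(x_{q-p+\ell}\) and \(x_\ell\) both lie to the left of \(\tau_B\), and \(x_{q-p+\ell}\) is closer, so Axiom \ref{ax:ssm} gives \(x_{q-p+\ell}\succ_B x_\ell\). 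By transitivity, \(s=x_m\succ_B x_\ell=t\). This contradicts \(t\succeq_B s\), so no equilibrium exhibits mutual leapfrogging.
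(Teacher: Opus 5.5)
Your proof is correct, and it differs from the paper's in where the shrinking step happens. In your notation, the paper transfers the comparison at party $A$'s actual displacements $(m-p,\,p-\ell)$, obtaining $R_B(m-p)\succeq_B L_B(p-\ell)$. It then uses $B$'s single-peakedness on both sides: $s=R_B(m-q)\succ_B R_B(m-p)$ and $L_B(p-\ell)\succ_B L_B(q-\ell)=t$. That chain needs $R_B(m-p)=x_{m+q-p}\in X$. This policy can fail to exist when $X$ is finite and $s$ lies within $q-p$ steps of its right end. For instance, at the profile $(x_6,x_2)$ of the paper's own seven-policy example, $R_B(3)=x_8$ does not exist, and Axiom~\ref{ax:cpc} is asserted only when the relevant policies exist.

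You instead use $A$'s single-peakedness on the right before transferring. You then apply the axiom at $(m-q,\,p-\ell)$, so that $R_B(m-q)=x_m$ itself, and use $B$'s single-peakedness only on the left. Every policy you invoke lies in $[t,s]$, so the domain clause is automatically met. The paper's version is slightly more symmetric and is fine when $X$ is unbounded to the right. Yours closes the boundary gap in the finite case that the paper actually works with.

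One correction: your remark that strictness from the second biconditional is essential is wrong. The first biconditional alone gives $x_m\succeq_B x_{q-p+\ell}$. Your final step $x_{q-p+\ell}\succ_B x_\ell$ is already strict, so $x_m\succ_B x_\ell$ follows without the second biconditional. Likewise, the weak ranking $x_{p+m-q}\succeq_A x_\ell$ suffices in your first step. The paper's proof also uses only the first biconditional.
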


\begin{proof}
Suppose, toward a contradiction, that \((s,t)\) is a pure-strategy Nash equilibrium exhibiting mutual leapfrogging. Then \(t<\tau_A<\tau_B<s\). By Proposition \ref{prop:order_reversal}, the election is tied at \((s,t)\). Since party \(A\)'s deviation to \(t\) also gives a tie, equilibrium requires \(s\succeq_A t\). Since party \(B\)'s deviation to \(s\) also gives a tie, equilibrium requires \(t\succeq_B s\).

Write \(\tau_A=x_p\), \(\tau_B=x_q\), \(t=x_r\), and \(s=x_u\), with \(r<p<q<u\). Let \(\delta=q-p>0\), \(a=u-p\), and \(b=p-r\). Relative to party \(A\)'s ideal point, \(s=R_A(a)\) and \(t=L_A(b)\). Thus \(s\succeq_A t\) gives \(R_A(a)\succeq_A L_A(b)\). By Axiom \ref{ax:cpc}, \(R_B(a)\succeq_B L_B(b)\).

Relative to party \(B\)'s ideal point,
\[
s=R_B(a-\delta),
\qquad
t=L_B(b+\delta).
\]
By Axiom \ref{ax:ssm}, \(R_B(a-\delta)\succ_B R_B(a)\) and \(L_B(b)\succ_B L_B(b+\delta)\). Therefore
\[
s=R_B(a-\delta)\succ_B R_B(a)\succeq_B L_B(b)\succ_B L_B(b+\delta)=t.
\]
Hence \(s\succ_B t\), contradicting \(t\succeq_B s\). Therefore no pure-strategy Nash equilibrium exhibits mutual leapfrogging.
\end{proof}
\begin{remark}
Axiom \ref{ax:cpc} is sufficient, not necessary, for ruling out mutual leapfrogging. In the example of Section \ref{sec:counterexample}, if every voter regards every policy as acceptable, then the same voters are active at every pair of distinct platforms. Proposition \ref{prop:same_active_voters_no_leapfrogging} then rules out mutual leapfrogging. The party preferences are unchanged, however, and Axiom \ref{ax:cpc} still fails.
\end{remark}
\section{Conclusion}
\label{sec:conclusion}

Single-peakedness ranks policies on each side of a party's ideal point, but not across sides. With abstention, this is enough to permit mutual leapfrogging in pure-strategy equilibrium. Agreement on cross-side comparisons rules it out. Single-peakedness alone is therefore not enough to prevent party-order reversal.

\bibliographystyle{plainnat}
\bibliography{cas-refs}

\end{document}